\newtheorem{lemma}{Lemma}
\newtheorem{theorem}[lemma]{Theorem}
\newtheorem{assumption}[lemma]{Assumption}
\newtheorem{remark}[lemma]{Remark}
\newcommand{\fH}{\mathfrak{H}}
\newcommand{\cH}{\mathcal{H}}
\newcommand{\cL}{\mathcal{L}}
\newcommand{\fD}{\mathfrak{D}}
\newcommand{\X}{\mathcal{X}}
\newcommand{\Y}{\mathcal{Y}}
\newcommand{\Z}{\mathcal{Z}}
\begin{document}

\date{4th May 2016}

\title[Ground state of the supermembrane and matrix models]{The ground state of the $D=11$ supermembrane and matrix models on compact regions}
\author[L Boulton]{Lyonell Boulton}
\author[M P Garcia del Moral]{Maria Pilar Garcia del Moral}
\author[A Restuccia]{Alvaro Restuccia}

\begin{abstract}
We establish a general framework for the analysis of boundary value problems of matrix models at zero energy on compact regions. We derive existence and uniqueness of ground state wavefunctions for the mass operator of the $D=11$ regularized supermembrane theory, that is the
$\mathcal{N}=16$  supersymmetric $SU(N)$ matrix model, on balls of finite radius. Our results rely on the structure of the associated Dirichlet form  and a factorization in terms of the supersymmetric charges. They also rely on the polynomial structure of the potential and various other supersymmetric properties of the system.
\end{abstract}

\maketitle

\clearpage

\tableofcontents
\section{Introduction} \label{intro}
Physical theories subject to boundary conditions play a crucial role in the study of physical properties at high energies and they are also notably relevant in the study of some condensed matter effects. Recently,  supersymmetric boundary conditions have received renewed attention, due to their connection with quantum phase transition at the boundary of topological superconductors. We refer to \cite{manolo1} and \cite{manolo2}, where a systematic analysis of a class of $N=2$ supersymmetric theories subject to several boundary conditions was carried out.

Matrix models related to the regularization of field theories with boundary conditions have been studied in order to test the AdS/CFT conjecture at finite temperature. In part this is due to its relation in the bulk picture with the black hole microstates in this regime \cite{lin-yin}. An analysis of the unbounded matrix model wavefunction, related to the (0+1) supersymmetric Yang-Mills ground state, has been considered in various other cases \cite{halpern}-\cite{wosiek1}. In the context of  M-theory and for the regularized supermembrane, this has been studied in \cite{dwhn} and subsequent work. For a matrix model wavefunction perspective see \cite{fh}-\cite{bhs}, for different aspects of Lorentz invariance see \cite{hlt}-\cite{wosiek2}, for inner solutions see \cite{hlt2} and for asymptotic solutions see \cite{halpern-su2}-\cite{fghhy}.

M-theory seen as a unification theory should provide  a quantum description of $D=11$ supergravity. In this setting, String Theory is regarded as a perturbative limit of M-theory which should include all non-perturbative effects. The $D=11$ supermembrane describes relevant degrees of freedom of M-theory, because it couples consistently to a $D=11$ supergravity background without destroying the local fermionic symmetry \cite{bst}. This provides strong evidence that the ground state of the $D=11$ supermembrane should correspond to a wavefunction constructed in terms of the $D=11$ supergravity multiplet.

In spite of several insightful attempts \cite{dwhn}-\cite{fghhy}, a construction of the ground state wavefunction of the $D=11$ supermembrane has remained an elusive challenge since the original analysis performed in \cite{dwhn}. It is well-known that the Hamiltonian of the theory formulated on a Minkowski spacetime in the Light Cone Gauge \cite{dwhn} is the sum of two components. One of the components is associated with the kinematics of the center of mass of the supermembrane described in terms of the zero modes. The other component is the mass operator of the supermembrane which only depends on the non-zero modes.

The ground state wavefunction, $\Psi$, factorizes into two parts
\[
\Psi= \Psi^0\Psi^{\mathrm{non-zero}}.
\]
 The zero mode wavefunction, $\Psi^0$, is responsible for the planar wave associated to the supergravity supermultiplet. The non-zero mode wavefunction, $\Psi^{\mathrm{non-zero}}$, should be annihilated by the mass operator of the supermembrane and should be a singlet under $SO(9)$. This ensures that the full wavefunction $\Psi$ is the unique solution constructed in terms of the $D=11$ supergravity multiplet.

Rigourous treatments of the spectrum (in particular the ground state of the supermembrane) have been achieved by means of an $SU(N)$ regularization of the theory \cite{dwhn}, \cite{goldstone}-\cite{dwmn}. These always involve the quantum mechanics of an $SU(N)$ matrix model which was originally introduced in a different context in \cite{halpern}, \cite{flume} and \cite{baake}. The corresponding Hamiltonian is the starting point of the matrix model theory developed in \cite{bfss}.

The regularized $D=11$ supermembrane was rigorously shown to have a continuous spectrum, the segment from zero to infinity, in \cite{dwln}. The compactification in a sector of the target space by itself does not change this property \cite{dwpp}. However, the spectrum becomes purely discrete with finite multiplicities, when the maps describing the regularized Hamiltonian satisfy a topological condition \cite{mrt} corresponding to a non-trivial central charge in the supersymmetric algebra. See \cite{gmr}-\cite{bgmr3} for a rigourous treatment in this respect. The setting developed in \cite{bgmr3} also shows that the BMN matrix model \cite{bmn} has purely discrete spectrum when considered beyond its semi-classical limit. This argument extended the semiclassical analysis performed in \cite{bmn}.

In this paper we will consider the $SU(N)$ regularized $D=11$ Supermembrane without any topological restriction. The mass operator or Hamiltonian has then continuous spectrum from $0$ to infinity and a problem of great interest is to determine whether $0$ is an eigenvalue of the Hamiltonian or not. In other words, proving or disproving the existence of ground state wave functions.
 
A plausible programme to examine this problem can be divided into three main steps. 
\begin{enumerate}
\item \label{parta} Consider first the Dirichlet problem on a bounded domain $\Omega$ with smooth boundary. In the case of the regularized $SU(N)$ Supermembrane one may consider a ball of arbitrary but finite radius $R$. The domain $\Omega$  remains invariant under the action of the local $SU(N)$ symmetry and global $SO(9)$ symmetry. There the Dirichlet problem may be formulated as \eqref{pro0} below. 
This formulation breaks explicitly supersymmetry as no periodicity condition on the datum at the boundary are imposed at this stage. Since the supersymmetry generates translations, one has to impose periodic boundary conditions in order to define a supersymmetric invariant problem (cf. \cite{mrt}, \cite{gmr}). The solution of the Dirichlet problem is not annihilated by the supercharges, however it minimizes among all wave functions of the Hilbert space satisfying the constraint and the given datum at the boundary in a semi-norm constructed from the supersymmetric charges. 
\item
Solve the external Dirichlet problem\footnote{ Another strategy,  already explored from a numerical perspective    \cite{bgmr3}, is to obtain conditions under which one gets a convergent sequence of solutions when the diameter of $\Omega$ increases to infinity. See the acknowledgements.}. That is, the Dirichlet problem for the mass operator or Hamiltonian on the complement of the domain $\Omega$. The problem should now be formulated on an unbounded region. It is very important here to handle the domain of the Schr\"odinger operator, taking into account the fact that the bosonic potential is unbounded and has valleys extending to infinity where the wavefunction might not vanish at infinity. There are strong results valid on bounded domains (use in the present work) which are not valid for unbounded regions in general. 
 
\item
One must then combine the previous two steps. To this end one must find a patching of the two solutions. One has to choose a suitable boundary datum in order to glue these two solutions with sufficient smoothness. If the smooth patching exists, then there exists the ground state wavefunction of the $D=11$ Supermembrane. Otherwise it does not exist. If the former alternative holds, then one will recover the invariance under supersymmetry of the ground state wave function. Moreover, for this particular boundary data the solution of the first and second stages coincide with the ground state wavefunction of the $D=11$ Supermembrane. 
\end{enumerate}
If the existence and uniqueness of the ground state wave function can be demonstrated by following these three steps,  one can  implement afterwards perturbation techniques in order to analyze properties of the solution.

\subsection*{Aims and scope of the present work}

 In this paper we will only address part \ref{parta} in the programme described above. In turns we describe a rather general methodology for examining the ground state wavefunctions of a class of supersymmetric models on compact regions subject to Dirichlet boundary conditions. Preliminary results in this direction were already considered in \cite{octonions} and \cite{su2}.

 We focus on the $SU(N)$ regularized supermembrane theory. The corresponding mass operator is the Hamiltonian of the $\mathcal{N}=16$ supersymmetric $SU(N)$ matrix model \cite{bfss}. The center of mass is allowed to move freely in a $D=11$ Minkowski spacetime but the membrane excitations are restricted to a bounded domain $\Omega$ of dimension $9(N^2-1)$ with a smooth boundary $\partial\Omega$.

In main contribution below we establish the existence and uniqueness of the ground state wavefunction of the mass operator, assuming that its values are known on $\partial\Omega$. The mass operator is subject to the physical constraints of the theory which, in the regularized model, generate local $SU(N)$ invariance.  In the large $N$ limit, these constraints are associated to the residual area preserving diffeomorphisms symmetry of the $D=11$ supermembrane in the Light Cone Gauge.  Our arguments depend crucially on the supersymmetric structure of the mass operator.

Mathematically, we solve the homogeneous boundary value problem on $\Omega$ for the Hamiltonian of the supermembrane, given by $H$ in \eqref{sm} below, subject to inhomogeneous boundary conditions. The latter are represented by means of a datum, $g$, on $\partial \Omega$. We will show that a unique wavefunction $\Psi$ exists such that
\begin{equation} \label{pro0} \tag{I}
\begin{cases}\begin{matrix} H\Psi=0  \\
\varphi^A\Psi=0 \end{matrix} & \textrm{in }  \Omega \\
\Psi=g & \textrm{on }   \partial\Omega.
\end{cases}
\end{equation}
The linear map $\varphi^A$ is the operator associated to the $SU(N)$ constraint where $A$ is an index in $SU(N)$, see \eqref{psia} below. Here and elsewhere $g\ne 0$, as  $g=0$ renders $\Psi=0$.

The Hamiltonian $H$ is a selfadjoint elliptic operator on the domain of homogeneous Dirichlet boundary conditions. Hence it has a basis of eigenfunctions vanishing at $\partial \Omega$ and only positive eigenvalues which accumulate at infinity. The ground eigenvalue of this Hamiltonian is therefore positive. Note that the latter is not directly related to the ground state problem on unbounded domains. We call $\Psi$ the ground state wavefunction of the mass operator in $\Omega$, since it corresponds to the restriction to $\Omega$ of the ground state wavefunction of the mass operator in $\mathbb{R}^{9(N^2-1)}$ for $g\not=0$. We will show in section~\ref{5.2} that, remarkably, $g$ can be chosen so that $\Psi$ is invariant under $SO(9)$.

The wavefunction $\Psi$ is the state minimizing a seminorm, among all other states satisfying the constraint and the boundary condition. This seminorm is the one associated to the inner product defined in terms of the supersymmetric charges, $Q$ and $Q^{\dagger}$, given by
\begin{equation} \label{seminormsupercharge}
(Q\eta,Q\lambda)_{\cL_2(\Omega)}+(Q^{\dagger}\eta,Q^{\dagger}\lambda)_{\cL_2(\Omega)} \qquad  \qquad \text{for }\eta, \lambda\in \mathcal{H}^1(\Omega).
\end{equation}
In the subspace\footnote{Here $\cH^1_0(\Omega)$  denotes the completion of the space of smooth functions with support a compact subset of $\Omega$, $C^\infty_{\mathrm{c}}(\Omega)$, in the norm of the Sobolev space $\cH^1(\Omega)$.} $\mathcal{H}^1_0(\Omega)$, this expression defines a norm and is directly related to the Hamiltonian $H$ through Green's identity. However it is only a seminorm in the full Sobolev space $\mathcal{H}^1(\Omega)$. This is analogous to the Dirichlet principle in Electrostatics, see Remark~\ref{rem2}.

\subsection*{Structure of the paper}
 We present the specific description of the Hamiltonian $H$ and formulate our main result about the problem \eqref{pro0} in section~\ref{2}. The proof of this result is deferred to section~\ref{5}. In sections~\ref{3} and \ref{4} we formulate a general framework which we believe is applicable to a large variety of supersymmetric models, where the Hamiltonian is a Schr{\"o}dinger operator with a polynomial potential satisfying conditions of strong ellipticity. In section~\ref{3} we consider existence and uniqueness of the solution for models with global symmetry only. In section~\ref{4} we formulate the criterion for existence and uniqueness of solutions for the more general case of models with gauge symmetry, which requires several technical considerations. In section~\ref{5} we implement the general framework of section~\ref{4}, in order to show existence and uniqueness of the solution of \eqref{pro0}.
We consider the case of the complete regularized supermembrane theory including the regularized area preserving constraint. In the final part of section~\ref{5} we discuss its properties under the $SO(9)$ symmetry.


\section{Formulation of the problem} \label{2} The $D=11$ supermembrane is described in terms of the membrane coordinates $X^m$ and the Grassmann coordinates $\theta _{\alpha}$. The former corresponds to a vector and the latter to a Majorana spinor. They transform as scalars under diffeomorphisms on the base manifold. The supermembrane theory in the Light Cone Gauge is invariant under rigid supersymmetry, rigid $SO(9)$ symmetry and also under area preserving diffeomorphisms of the base manifold. The latter are the residual gauge symmetry obtained from the original invariance of the action under supermembrane worldvolume diffeomorphisms, once the Light Cone Gauge condition has been imposed.

In \cite{dwhn} the Hamiltonian and the wavefunction were given according to the  symmetry group $SO(9)$, so the above representation of the fields is explicit. For convenience the Majorana spinor is represented by means of linear combinations of elements of the subgroup $SO(7)\times U(1)$. In this way
an explicit expansion $\lambda_{\alpha}$ of the operator associated to the fermionic coordinates in terms of a unique complex spinor of eight components was obtained in \cite{dwhn}. For this purpose, one defines two eigenspinors of $\gamma_9$, called $\theta^{\pm}$, such that
\[\gamma_9\theta^{\pm}=\pm\theta^{\pm}.\]
Then a complex $SO(7)$ spinor satisfies
\[\lambda^{\dag}=2^{1/4}(\theta^+-i\theta^{-})\quad \text{and} \quad \lambda=2^{1/4}(\theta^++i\theta^{-}),\]
where  $\lambda^{\dag}$ is the fermionic conjugate momentum to $\lambda$.

Similarly, the bosonic coordinates $X^{M}$ can be expressed in terms of the representations of $SO(7)\times U(1)$ by means of $(X^m,Z,\overline{Z})$. Here $X^m$ for $m=1,\dots,7$ are the components of a vector in $SO(7)$, and the complex scalars
\[Z=\frac{1}{\sqrt{2}}(X^8+iX^9)\quad \text{and} \quad \overline{Z}=\frac{1}{\sqrt{2}}(X^8-iX^9)\]
transform under $U(1)$. The corresponding bosonic  canonical momenta decouple, as a vector in $SO(7)$ of components $P_m$, a complex momentum  $\mathcal{P}$ in $U(1)$ and its conjugate $\overline{\mathcal{P}}$. That is $P_{M}=(P_{m},\mathcal{P},\overline{\mathcal{P}})$ where
 \[\mathcal{P}=\frac{1}{\sqrt{2}}(P^8-iP^9) \quad \text{and} \quad \overline{\mathcal{P}}=\frac{1}{\sqrt{2}}(P^8+iP^9).\]

Once the theory is regularized by means of the group $SU(N)$ \cite{dwhn},\cite{goldstone}-\cite{dwmn}, the field operators are labeled by an index $A$ in $SU(N)$.  The fields transform in the adjoint representation of the group.

We consider two realizations of the wavefunctions. One of these will be used in the arguments concerning the existence and uniqueness of the ground state wavefunction under an assumption on the kernel of the susy charges. The other one will be used in the proof of this assumption for the $D=11$ supermembrane.

For the first representation, we consider the fermion Fock space. That is a linear space of dimension $2^{8(N^2-1)}$ which carries an irreducible representation of the Clifford algebra generated by $(\lambda^{\dag}+\lambda)$ and $i(\lambda^{\dag}-\lambda)$. The Hilbert space of physical states consists of the wavefunctions which take values in the fermion Fock space and satisfy the first class constraint.

In the second representation the wavefunction comprises elements of a Grassmann algebra generated by $\lambda_{\alpha}^A$ and is given by
\[
\Psi(X_i^A, Z^A,\overline{Z}^A, \lambda_{\alpha}^A)=\sum_{u=0}^{8 (N^2-1)}\Phi_{A_{1}\dots A_{u}}^{\alpha_{1}\dots\alpha_{u}}(X,Z,\overline{Z})\lambda_{\alpha_1}^{A_{1}}
\lambda_{\alpha_2}^{A_{2}}\dots \lambda_{\alpha_u}^{A_{u}}.
\]
In the Schr{\"o}dinger picture, $\lambda_{\alpha A}^{\dag}=\frac{\partial}{\partial\lambda_{\alpha}^A}$ is the  momentum conjugate to $\lambda_{\alpha}^A$.
The coefficient functions $\Phi(X,Z,\overline{Z})$ lie in the usual $\cL_2$ space and the norm of the state is given by
\[
\left\|\Psi\right\|_{\cL_2(\Omega)}^2=\sum_{u=0}^{8(N^2-1)}\frac{1}{u!}\left\|\Phi_{A_{1}\dots A_{u}}^{\alpha_{1}\dots\alpha_{u}}\right\|_{\cL_2(\Omega)}^2.
\]

In \cite{dwhn} it was shown that the zero mode states transform under $SO(9)$ as a $[(44\oplus 84)_{\mathrm{bos}}\oplus 128_{\mathrm{fer}}]$ representation which corresponds to the massless $D=11$ supergravity supermultiplet. Then the construction of the ground state wavefunction reduces to finding a non-trivial solution to
\begin{equation}\label{A}H\Psi=0\end{equation}
where $H=\frac{1}{2}M$ and  $\Psi\equiv \Psi^{\mathrm{non-zero}}$ is required to be a singlet under $SO(9)$. Here $M$ is the mass operator of the supermembrane.

From the supersymmetric algebra, it follows that the Hamiltonian can be express in terms of the supercharges as
\[H =\frac{1}{16}\{Q_{\alpha},Q^{\dagger}_{\alpha}\}.\]
The physical subspace of solutions is given by the kernel of the first class constraint $\varphi^A$. That is
\begin{equation} \label{constraint} \varphi^A\Psi=0\end{equation}
for all $A=1,\dots, N^2-1.$

The supercharges associated to  modes invariant under $SO(7)\times U(1)$ are given explicitly \cite{dwhn} by
\[
\begin{aligned}Q_{\alpha}&=\left\{-i\Gamma_{\alpha\beta}^i\partial_{X_i^A}+\frac{1}{2}f_{ABC}X_i^B X_j^C \Gamma^{ij}_{\alpha\beta}-f_{ABC} Z^B\overline{Z}^C\delta_{\alpha\beta}\right\}\lambda_{\beta}^A\\
&+\sqrt{2}\left\{\delta_{\alpha\beta}\partial_{Z^A}+i f_{ABC}X_i^B \overline{Z}^C \Gamma^i_{\alpha\beta} \right\}\partial_{\lambda_{\beta}^A}\end{aligned}\]
and
\[
\begin{aligned}Q_{\alpha}^{\dagger}&=\left\{i\Gamma_{\alpha\beta}^i\partial_{X_i^A}+\frac{1}{2}f_{ABC}X_i^B X_j^C \Gamma^{ij}_{\alpha\beta}+f_{ABC} Z^B\overline{Z}^C\delta_{\alpha\beta}\right\}\partial_{\lambda_{\beta}^A}\\
&+\sqrt{2}\left\{-\delta_{\alpha\beta}\partial_{\overline{Z}^A}+i f_{ABC}X_i^B Z^C \Gamma^i_{\alpha\beta} \right\}\lambda_{\beta}^A.\end{aligned}\]
The corresponding superalgebra satisfies \cite{dwhn}
\[\begin{aligned}&\{Q_{\alpha},Q_{\beta}\}=2\sqrt{2}\delta_{\alpha\beta}\overline{Z}^A\varphi_A,\\
&\{Q_{\alpha}^{\dagger},Q_{\beta}^{\dagger}\}=2\sqrt{2}\delta_{\alpha\beta}Z^A\varphi^A,\\
&\{Q_{\alpha},Q^{\dagger}_{\beta}\}=2\delta_{\alpha\beta}H-2i\Gamma^i_{\alpha\beta}X_i^A\varphi_A.\end{aligned}\]
The Hamiltonian associated to the the regularized mass operator of the supermembrane \cite{dwhn} is then
\begin{equation} \label{sm}
\begin{aligned}
H&=\frac{1}{2}M=-\Delta+V_{\mathrm{B}}+V_{\mathrm{F}}\\
\Delta&=\frac{1}{2}\frac{\partial^2}{\partial X^i_A\partial X_i^A}+\frac{1}{2}\frac{\partial^2}{\partial Z_A\partial \overline{Z}^A}\\
V_{\mathrm{B}}&=\frac{1}{4}f_{AB}^Ef_{CDE}\{X_i^AX_j^BX^{iC}X^{jD}+4X_i^AZ^BX^{iC}\overline{Z}^{D}+2Z^A\overline{Z}^B\overline{Z}^{C}Z^{D}\}\\
V_{\mathrm{F}}&=if_{ABC}X_i^A\lambda_{\alpha}^B\Gamma_{\alpha\beta}^i\frac{\partial}{\partial\lambda_{\beta C}}
+\frac{1}{\sqrt{2}}f_{ABC}(Z^A\lambda_{\alpha}^B\lambda_{\alpha}^C-\overline{Z}^A\frac{\partial}{\partial \lambda_{\alpha B}}\frac{\partial}{\partial\lambda_{\alpha C}}).
\end{aligned}
\end{equation}
The generators of the local $SU(N)$ symmetry are
\begin{equation} \label{psia}
\varphi^A =f^{ABC}\left( X_{i}^B\partial_{X_i^C}+Z_B\partial_{Z^C}+\overline{Z}_B\partial_{\overline{Z}^C}
+\lambda_{\alpha}^{B}\partial_{\lambda_{\alpha}^C}\right).
\end{equation}
They annihilate the physical states. They commute with $Q_{\alpha}$ and $Q_{\alpha}^{\dagger}$, hence also with $H$.

The Hamiltonian $H$ is a positive operator. It annihilates $\Psi$ on the physical subspace, if and only if  $\Psi$ is a singlet under supersymmetry. In this case,
$$Q_{\alpha}\Psi=0\quad \text{and}\quad Q_{\alpha}^{\dagger}\Psi=0.$$
The latter ensures that the wavefunction is massless, however it does not guarantee that the ground state wavefunction is the corresponding supermultiplet associated to supergravity. This holds true, only when $\Psi$ is a singlet under $SO(9)$.

The following is the main result of this paper. We strongly believe it provides a valuable insight into the problem of existence for the ground state of $H$ on an unbounded domain. In particular it settles completely part \ref{parta} of the programme mention in section~\ref{intro}.

\begin{theorem} \label{propermain}
For $\Omega$ a $9(N^2-1)$-dimensional ball and $g\not=0$ sufficiently regular in $\Omega$, the boundary value problem \eqref{pro0} always has a solution $\Psi$ and this solution is always unique.
\end{theorem}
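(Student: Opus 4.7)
The strategy is to recast \eqref{pro0} as a minimisation problem for the Dirichlet form \eqref{seminormsupercharge} with inhomogeneous boundary data, read off the partial differential equation from the Euler--Lagrange conditions, and deduce the gauge constraint $\varphi^A\Psi = 0$ a posteriori from the commutation $[H,\varphi^A]=0$. To lift the boundary datum, fix a sufficiently regular extension $G \in \cH^1(\Omega)$ of $g$ with $\varphi^A G = 0$ in $\Omega$; this is possible because $\Omega$ is $SU(N)$-invariant, the $\varphi^A$ are tangential to $\partial\Omega$, and an admissible $g$ itself satisfies $\varphi^A g = 0$. Write $\Psi = G + \Psi_0$ with the unknown $\Psi_0 \in \cH^1_0(\Omega)$. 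The key algebraic observation is that taking the trace over $\alpha=\beta$ of the superalgebra relation $\{Q_\alpha,Q_\beta^{\dagger}\} = 2\delta_{\alpha\beta} H - 2i\Gamma^i_{\alpha\beta} X_i^A \varphi_A$ eliminates the constraint term (by tracelessness of $\Gamma^i$), yielding $\{Q_\alpha,Q_\alpha^{\dagger}\} = 16H$ exactly. Hence Green's identity on $\cH^1_0(\Omega)$ gives
\[
\mathfrak{a}(\eta,\eta) := (Q\eta,Q\eta)_{\cL_2(\Omega)} + (Q^{\dagger}\eta,Q^{\dagger}\eta)_{\cL_2(\Omega)} = 16\,(H\eta,\eta)_{\cL_2(\Omega)}
\]
for all $\eta \in \cH^1_0(\Omega)$, and the strict positivity of the first Dirichlet eigenvalue of $H$ (noted in the excerpt), combined with G\aa{}rding's inequality for $H=-\Delta + V_{\mathrm B} + V_{\mathrm F}$ on the bounded ball, delivers coercivity $\mathfrak{a}(\eta,\eta) \geq c\,\|\eta\|_{\cH^1(\Omega)}^2$.

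By the Lax--Milgram theorem, there is a unique $\Psi_0 \in \cH^1_0(\Omega)$ with $\mathfrak{a}(\Psi_0,v) = -\mathfrak{a}(G,v)$ for every $v \in \cH^1_0(\Omega)$; equivalently, $\Psi = G + \Psi_0$ uniquely minimises $\mathfrak{a}(\eta,\eta)$ over $G + \cH^1_0(\Omega)$ and satisfies the weak equation $(H\Psi,v)_{\cL_2(\Omega)} = 0$ for every $v \in \cH^1_0(\Omega)$. Interior elliptic regularity for the Schr\"odinger operator $H$ with smooth polynomial potential promotes this to $H\Psi = 0$ classically in $\Omega$, and $\Psi|_{\partial\Omega} = g$ in the trace sense by construction. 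To enforce the constraint, note that $\varphi^A \Psi_0 \in \cH^1_0(\Omega)$ because $\Psi_0$ vanishes on $\partial\Omega$ and $\varphi^A$ acts tangentially there; since $\varphi^A G = 0$, one has $\varphi^A \Psi \in \cH^1_0(\Omega)$ as well. From $[H,\varphi^A]=0$ and $H\Psi = 0$ one gets $H(\varphi^A\Psi) = 0$, and the trivial kernel of $H$ on $\cH^1_0(\Omega)$ forces $\varphi^A \Psi = 0$ throughout $\Omega$. Uniqueness of $\Psi$ is inherited directly from the Lax--Milgram step.

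The principal obstacle is twofold. First, the construction of the regular gauge-compatible lift $G$ of $g$, and the verification that $\varphi^A \Psi_0$ lies in $\cH^1_0(\Omega)$, hinge on careful trace considerations together with the tangency of the $SU(N)$-orbits to $\partial\Omega$ --- the $SU(N)$-invariance of the ball is essential here. Second, the coercivity estimate requires absorbing the indefinite fermionic term $V_{\mathrm F}$ into the leading bosonic quadratic form $\|\nabla\eta\|^2$; this is where the polynomial structure of the potential, the boundedness of $\Omega$, and the positivity of the first Dirichlet eigenvalue of $H$ must combine. The abstract framework set up in sections~\ref{3} and \ref{4} is presumably designed precisely to package these supersymmetric and gauge-invariant structural inputs into verifiable hypotheses, so that the concrete implementation for the $\mathcal{N}=16$ $SU(N)$ mass operator in section~\ref{5} reduces to checking that the superalgebra identity and the tangency of the gauge flow hold, as described above.
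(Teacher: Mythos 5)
Your proposal correctly identifies the variational structure and the Lax--Milgram machinery, and your idea of deriving the constraint $\varphi^A\Psi=0$ a posteriori from $[H,\varphi^A]=0$ and the triviality of $\ker H$ on the Dirichlet domain is a genuinely different (and in some ways leaner) route than the paper's, which sets up an explicit orthogonal decomposition $\cH^1_0(\Omega)=\X^1_0+\Y^1_0$ adapted to the constraint and runs Lax--Milgram on the physical subspace $\X^1_0$. The paper's decomposition pays off in generality (it only needs the weaker hypothesis (K$_\X$), i.e.\ triviality of $\ker Q\cap\ker Q^\dag$ on the constrained subspace, and it avoids having to justify that $\varphi^A\Psi$ lands back in the operator domain of $H$), but for this particular model both strategies would close the argument.

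The genuine gap is that you treat the strict positivity of the first Dirichlet eigenvalue of $H$ as given (``noted in the excerpt''), whereas this is precisely the non-trivial input that the paper must establish, and it is equivalent to condition (K): $\ker(Q|_{\cH^1_0(\Omega)})\cap\ker(Q^\dag|_{\cH^1_0(\Omega)})=\{0\}$. Because $H=\tfrac{1}{16}\{Q_\alpha,Q_\alpha^\dag\}\geq 0$, compactness of the resolvent only gives $\lambda_1\geq 0$; ruling out $\lambda_1=0$ is the whole content of Lemma~\ref{nullkernel}, and that in turn rests on (K). The paper verifies (K) by a boundary analysis that you do not attempt: using Lemma~\ref{regularity} to extend $Q_\alpha\psi=Q_\alpha^\dag\psi=0$ pointwise up to $\partial\Omega$, reducing the relations \eqref{u3}--\eqref{u4} at the sphere $\rho=R$ to equations for the sole surviving normal derivative $\psi_{\rho^2}$, combining them algebraically to obtain $R^2\psi_{\rho^2}\vert_{\partial\Omega}=0$, and then invoking the Cauchy--Kovalevskaya theorem (the potential being polynomial, hence analytic) to propagate $\psi\equiv 0$ into all of $\Omega$. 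Without this step your coercivity claim (G\aa rding plus ``positive ground eigenvalue'') is circular, since there is no a priori reason the indefinite term $V_{\mathrm F}$ plus the quartic $V_{\mathrm B}$ cannot produce a Dirichlet zero mode. Your plan would be complete once you supply this boundary-trace/unique-continuation argument; the rest (the trace of the superalgebra relation, the lift of $g$, the elliptic regularity bootstrap) is sound and aligns with the paper.
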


The proof and the precise assumptions on the regularity of $g$ here will be discussed in section~\ref{5}. See Theorem~\ref{main}. Under suitable conditions on $g$ the solution is a singlet under $SO(9)$.


\section{Matrix models with global symmetries}  \label{3}

  In this section we describe a setting for the analysis of ground states which applies to a wide variety of matrix models on compact domains, given appropriate boundary data. The presence of an area preserving constraint will be address in section~\ref{4}.

Below $\Omega\subset \mathbb{R}^{9(N^2-1)}$ will be a bounded open set whose boundary
$\partial\Omega$ is of class $C^\infty$. Here and elsewhere $\cL_2(\Omega)$, $\cH^1(\Omega)$ and $\cH^2(\Omega)$, are the corresponding Lebesgue and Sobolev Hilbert spaces of fields in $\Omega$ with $d= 9(N^2-1)$ components. We will denote the inner product of $\cL_2(\Omega)$ by $(\cdot,\cdot)_{\cL_2(\Omega)}$. Here $d$ is usually a large integer. The space $\cH^1_0(\Omega)$  is the completion in the norm of $\cH^1(\Omega)$ of the subspace $C^\infty_{\mathrm{c}}(\Omega)$, the functions ($d$ components also) with support a compact subset of $\Omega$.

\subsection*{Conditions on a generic Hamiltonian}
Let
\[\fH=-\nabla^2\mathbb{I}+V\]
be a Schr{\"o}dinger operator
where the matrix potential $V=V^\dagger\in C^{\infty}(\overline{\Omega})$. Then $\fH$ is strongly elliptic \cite[Chapter 7]{Folland}. Let
\[
\operatorname{Dom}(\fH)=\cH^{1}_{0}(\Omega)\cap \cH^2(\Omega).
\]
Since $V:\cL_2(\Omega)\longrightarrow \cL_2(\Omega)$ is bounded and symmetric, according to Kato-Rellich's theorem,
\[
     \fH:\operatorname{Dom}(\fH)\longrightarrow \cL_2(\Omega)
\]
is a selfadjoint operator.

Everywhere below we assume that $\fH$ is the Hamiltonian of a supersymmetric theory in the following precise sense. The identity
\begin{equation}   \label{susycondition} \tag{S}
    \fH= \frac{1}{2}(Q Q^{\dag}+Q^{\dag}Q)=[Q,Q^\dag]
\end{equation}
holds true, for $Q$ a supercharge operator. This supercharge operator is a first order differential operator satisfying supersymmetric superalgebra conditions, and such that
\[
Q,Q^{\dag}:\cH^1(\Omega) \longrightarrow \cL_2(\Omega)
\]
and
\[
    Q,Q^{\dag}:\cH^2(\Omega)\cap \cH^1_0(\Omega) \longrightarrow \cH^1(\Omega).
\]
The latter ensures that the domain of $\fH$ in the representation \eqref{susycondition} is mapped appropriately.

Additionally we will assume that this supercharge operator satisfies the condition
\begin{equation}    \label{superchargecond} \tag{K}
    \ker(Q|_{\cH^1_0(\Omega)})\cap \ker(Q^\dag|_{\cH^1_0(\Omega)})=\{0\}.
\end{equation}
That is
\[
    Q\psi=0 \text{ and } Q^\dag\psi=0\text{ for }\psi\in \cH^1_0(\Omega)
    \quad \Rightarrow \quad \psi=0.
\]
This will be satisfied by the supermembrane Hamiltonian $H$, also by the Hamiltonian considered at the end of section~\ref{3}, and it is also true for other interesting cases \cite{su2,octonions}.

All the results reported here depend strongly on the condition \eqref{superchargecond}. The following lemma highlights the role played by this assumption in relation to the boundary value problem associated to $\fH$. The identity \eqref{hkernill} is crucial for the existence and uniqueness of the solutions of \eqref{pro0}.

\begin{lemma} \label{nullkernel}
Let $\psi\in\operatorname{Dom}(\fH)$ be such that $\fH\psi=0$. Then
$\psi=0$.
\end{lemma}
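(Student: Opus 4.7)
The plan is to test the equation $\fH\psi=0$ against $\psi$ in $\cL_2(\Omega)$ and exploit the supersymmetric factorization \eqref{susycondition} to express the resulting quadratic form as a sum of squared $\cL_2$-norms of $Q\psi$ and $Q^\dag\psi$. Because $\psi\in\cH^1_0(\Omega)$, the boundary contributions in the relevant Green-type identities vanish, so one should arrive at
\[
0\,=\,(\fH\psi,\psi)_{\cL_2(\Omega)}\,=\,\tfrac{1}{2}\|Q\psi\|_{\cL_2(\Omega)}^2+\tfrac{1}{2}\|Q^\dag\psi\|_{\cL_2(\Omega)}^2.
\]
From this identity $Q\psi=0$ and $Q^\dag\psi=0$ in $\cL_2(\Omega)$, and since $\psi\in\cH^1_0(\Omega)$, the hypothesis \eqref{superchargecond} yields $\psi=0$ at once.

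Concretely, the first step is to check that the objects involved are well defined: the mapping properties of $Q$ and $Q^\dag$ recorded just before \eqref{superchargecond} ensure $Q\psi,Q^\dag\psi\in\cH^1(\Omega)$ and $Q^\dag Q\psi,QQ^\dag\psi\in\cL_2(\Omega)$ whenever $\psi\in\operatorname{Dom}(\fH)=\cH^1_0(\Omega)\cap\cH^2(\Omega)$. The second step is to establish the two Green-type identities
\[
(Q^\dag Q\psi,\psi)_{\cL_2(\Omega)}=\|Q\psi\|_{\cL_2(\Omega)}^2,\qquad (QQ^\dag\psi,\psi)_{\cL_2(\Omega)}=\|Q^\dag\psi\|_{\cL_2(\Omega)}^2.
\]
For $\psi\in C^\infty_{\mathrm{c}}(\Omega)$ these are immediate from the very definition of $Q^\dag$ as the formal adjoint of $Q$, as no boundary traces intervene. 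To extend them to the full domain I would approximate $\psi$ in the $\cH^1$-norm by a sequence $\psi_n\in C^\infty_{\mathrm{c}}(\Omega)$, which is possible by the definition of $\cH^1_0(\Omega)$, and pass to the limit using the continuity of $Q,Q^\dag:\cH^1(\Omega)\to\cL_2(\Omega)$. The third step is to combine these identities with \eqref{susycondition} and $\fH\psi=0$ to obtain the displayed sum-of-squares identity, and conclude via \eqref{superchargecond}.

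The main obstacle is not deep but rather the careful justification of the integration by parts in the form stated. One has to approximate $\psi$ in a topology strong enough that the bilinear form $(\eta,\lambda)\mapsto(Q\eta,Q\lambda)_{\cL_2(\Omega)}+(Q^\dag\eta,Q^\dag\lambda)_{\cL_2(\Omega)}$ appearing in \eqref{seminormsupercharge} is continuous, which is precisely why $\cH^1$ is the right norm. Once the Green-type identities are in place the rest is purely formal, and the essential nontrivial input is the supersymmetric cancellation hypothesis \eqref{superchargecond}.
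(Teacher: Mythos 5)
Your proposal is correct and takes essentially the same route as the paper: test $\fH\psi=0$ against $\psi$, use the factorization \eqref{susycondition} to obtain the sum-of-squares identity \eqref{hkernill}, deduce $Q\psi=Q^\dag\psi=0$, and invoke \eqref{superchargecond}. The only difference is that you spell out the density argument (approximating $\psi$ by $C^\infty_{\mathrm{c}}(\Omega)$ in the $\cH^1$-norm) justifying the Green-type identities, a step the paper leaves implicit.
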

 \begin{proof}
If $\psi$ is as in the hypothesis, then
\begin{equation}  \label{hkernill}
\begin{aligned}
0=(\psi, \fH\psi)&=\frac{1}{2}(\psi,QQ^{\dag}\psi)+\frac{1}{2}(\psi,Q^{\dag}Q\psi)  \\ &=\frac12\left(\left\|Q^{\dag}\psi\right\|^2_{\cL_2(\Omega)}+\left\|Q\psi\right\|^2_{\cL_2(\Omega)}\right).
\end{aligned}
\end{equation}
Hence $\psi\in \ker(Q|_{\cH^1_0(\Omega)})\cap \ker(Q^\dag|_{\cH^1_0(\Omega)})$ and $\psi=0$ according to \eqref{superchargecond}.
\end{proof}

That is, \eqref{superchargecond} implies
\[
     \ker{\fH}=\{0\}
\]
on a supersymmetric Hamiltonian \eqref{susycondition}.

\subsection*{The Dirichlet form}
Define the strongly elliptic Dirichlet form (of order one) associated to $\fH$ by means of the identity
\[
\fD(\phi, \psi)=(\nabla\phi,\nabla\psi)+(\phi,V\psi).
\]
 Then
\[
    \fD:\cH^1_0(\Omega)\times \cH^1_0(\Omega)\longrightarrow \mathbb{C}
\]
is a non-negative coercive closed quadratic form.

That $\fD$ is coercive, means that
for suitable constants $C>0$ and $\lambda\geq 0$ (sufficiently large),
\[
\fD(\psi,\psi)\ge C\left\|\psi\right\|^2_{\cH^1(\Omega)}-\lambda\left\|\psi\right\|^2_{\cL_2(\Omega)}
\qquad \text{for all } \psi\in \cH^1_0(\Omega).
\]
In the current setting, the inequality is valid for $C=1$ and $\lambda=1+\Lambda$, where $\Lambda$ is a lower bound of the minimum eigenvalue of the potential on $\overline{\Omega}$.

That the form $\fD$ is non-negative can be seen as follows.
For all $\psi\in \operatorname{Dom}(\fH)$,
\begin{equation} \label{diriform}
     \fD(\psi,\psi)=\|Q\psi\|_{\cL_2(\Omega)}^2+\|Q^\dag \psi\|_{\cL_2(\Omega)}^2\geq 0.
\end{equation}
Recall \eqref{hkernill}.
As $\operatorname{Dom}(\fH)$ is a core (in the form sense) for
$\fD$, then \eqref{diriform} also holds true for all $\psi\in \cH^1_0(\Omega)$.

By virtue of Lemma~\ref{nullkernel}, it then follows that the ground eigenvalue of $\fH$ is strictly positive. Note that any $\psi\in \operatorname{Dom}(\fH)$ vanishes on $\partial\Omega$.

\subsection*{The boundary value problems}
The Dirichlet problems associated to $\fH$ can be re-written in terms of $\fD$. Here we assume that data, $g\in \cH^2(\Omega)$ and $f=(\nabla^2-V)g\in \cL_2(\Omega)$, are given.

The  homogeneous Dirichlet problem with inhomogeneous boundary conditions associated to $\fH$ is formulated as follows. Find $\Psi\in \cH^2(\Omega)$ such that
\begin{equation} \label{pro1} \tag{II}
\begin{cases}(-\nabla^2+V)\Psi=0 & \textrm{in }  \Omega\\
\Psi=g & \textrm{on }   \partial\Omega.
\end{cases}
\end{equation}
This is related to the inhomogeneous Dirichlet problem with homogeneous boundary conditions.
Find $\Phi\in \cH^2(\Omega)$ such that
\begin{equation} \label{pro2}   \tag{III}
\begin{cases}
(-\nabla^2+V)\Phi=f\quad & \textrm{in } \Omega \\
\Phi=0 & \textrm{on } \partial\Omega.
\end{cases}
\end{equation}
The weak formulation of \eqref{pro2} is the weak inhomogeneous Dirichlet problem with homogeneous boundary conditions.
Find $\Phi\in \cH^1_0(\Omega)$ such that
\begin{equation} \label{pro3} \tag{IV}
\fD(\phi,\Phi)=(\phi,f)  \textrm{ for all } \phi\in \cH^1_0(\Omega).
\end{equation}

By setting $\Psi=\Phi+g$ we see that \eqref{pro1} and \eqref{pro2} are equivalent.
Clearly a solution of \eqref{pro2} would also be a  solution of \eqref{pro3}. Normally the latter is also called a weak solution.
Moreover, a solution $\Phi$ of \eqref{pro3} originally in $\cH^1_0(\Omega)$ will also be in $\cH^2(\Omega)$ and would satisfy \eqref{pro2} (see below for the precise statement).  In passing from \eqref{pro1} to \eqref{pro3} the boundary condition $\Psi=g$ on  $\partial\Omega$ has been replaced by the condition $\Phi\in \cH^1_0(\Omega)$.

Let $\delta(\xi)$ be the seminorm associated to the inner product \eqref{seminormsupercharge},
\begin{equation*}
\delta^2(\xi)=\left\|Q\xi\right\|^2_{\cL_2(\Omega)}+\left\|Q^\dagger\xi\right\|^2_{\cL_2(\Omega)},
\end{equation*}
for $\xi\in \cH^1(\Omega)$. The solution $\Psi$ to the inhomogeneous Dirichlet problem \eqref{pro1} is the state that minimises this seminorm, among all other states $\xi\in \cH^1(\Omega)$ satisfying the same boundary condition $\xi=g$ on $\partial\Omega$.
Indeed, \begin{equation*}
\xi-\Psi=\eta \in \cH_0^1(\Omega)
\end{equation*}
and
\begin{equation*}
\delta^2(\xi)=\delta^2(\Psi)+\delta^2(\eta)\geq \delta^2(\Psi),
\end{equation*}
since
\begin{equation*}
(Q\Psi,Q\eta)_{\cL_2(\Omega)}+(Q^{\dagger}\Psi,Q^{\dagger}\eta)_{\cL_2(\Omega)}= ((-\nabla^2+V)\Psi,\eta)_{\cL_2(\Omega)}=0.
\end{equation*}
Below we will show that this minimising state $\Psi$ exists and is unique. Note that $\delta^2(\eta)=0$ implies $\eta=0$, because $\eta\in \cH^1_0(\Omega)$.

\begin{remark} \label{rem2}
The electrostatic field in the vacuum fulfils an analogous property. The electrostatic energy $E$ of an electrostatic potential $\xi$ on a bounded domain $\Omega$ is given by
\begin{equation*}
E=\int_{\Omega}\nabla\xi\nabla\xi.
\end{equation*}
This is a seminorm in $\cH^1(\Omega)$. The harmonic potential is the one that minimises $E$ among all other potentials satisfying the same boundary condition on $\partial\Omega$.
\end{remark}

\subsection*{Existence and uniqueness of solutions}

As we shall see next, for regular data as above, \eqref{pro1} and \eqref{pro2} are always uniquely solvable.

\begin{lemma} \label{existunique}
Let $g\in \cH^2(\Omega)$. There always exists a unique solution $\Phi\in \cH^1_0(\Omega)\cap \cH^2(\Omega)$ of \eqref{pro2} and a corresponding unique solution  $\Psi= \Phi+g\in \cH^2(\Omega)$ of \eqref{pro1}.
\end{lemma}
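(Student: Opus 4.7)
The natural strategy is to pass to the weak formulation \eqref{pro3}, solve it by a Lax--Milgram argument on $\cH^1_0(\Omega)$, and then bootstrap to $\cH^2(\Omega)$ via interior/boundary elliptic regularity. Uniqueness for \eqref{pro2} is a direct consequence of Lemma~\ref{nullkernel}, so the main work lies in existence. Setting $\Psi=\Phi+g$ and using that $f=(\nabla^2-V)g\in\cL_2(\Omega)$, equivalence between \eqref{pro1} and \eqref{pro2} is immediate.

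\textbf{Uniqueness.} If $\Phi_1,\Phi_2\in\cH^1_0(\Omega)\cap\cH^2(\Omega)$ both solve \eqref{pro2}, then $\Phi_1-\Phi_2\in\operatorname{Dom}(\fH)$ and $\fH(\Phi_1-\Phi_2)=0$. By Lemma~\ref{nullkernel} we conclude $\Phi_1=\Phi_2$. The same argument applied to $\Psi_1-\Psi_2$ gives uniqueness for \eqref{pro1}.

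\textbf{Existence via Lax--Milgram.} The form $\fD$ is symmetric and bounded on $\cH^1_0(\Omega)\times\cH^1_0(\Omega)$ because $V\in C^\infty(\overline{\Omega})$ is bounded. The key point is to upgrade the G\aa{}rding-type estimate
\[
\fD(\psi,\psi)\ge \|\psi\|^2_{\cH^1(\Omega)}-(1+\Lambda)\|\psi\|^2_{\cL_2(\Omega)}
\]
to genuine coercivity $\fD(\psi,\psi)\ge c\|\psi\|^2_{\cH^1(\Omega)}$ on $\cH^1_0(\Omega)$. This is precisely where assumption \eqref{superchargecond} enters: by \eqref{diriform} the form $\fD$ is non-negative on $\cH^1_0(\Omega)$, and since $\fH$ has compact resolvent on the bounded domain $\Omega$, its ground eigenvalue $\mu_0$ is attained. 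Lemma~\ref{nullkernel} forces $\mu_0>0$, yielding the Poincar\'e-type bound $\fD(\psi,\psi)\ge \mu_0\|\psi\|^2_{\cL_2(\Omega)}$. Combining this with the G\aa{}rding estimate by writing $\fD=\tfrac{\mu_0}{\mu_0+1+\Lambda}\fD+\tfrac{1+\Lambda}{\mu_0+1+\Lambda}\fD$ produces
\[
\fD(\psi,\psi)\ge \frac{\mu_0}{\mu_0+1+\Lambda}\,\|\psi\|^2_{\cH^1(\Omega)},\qquad \psi\in\cH^1_0(\Omega),
\]
which is the coercivity we need. The linear functional $\phi\mapsto(\phi,f)_{\cL_2(\Omega)}$ is continuous on $\cH^1_0(\Omega)$ since $f\in\cL_2(\Omega)$, so Lax--Milgram produces a unique weak solution $\Phi\in\cH^1_0(\Omega)$ of \eqref{pro3}.

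\textbf{Regularity and conclusion.} Since $\fH=-\nabla^2\mathbb{I}+V$ is strongly elliptic with smooth coefficients, $\partial\Omega\in C^\infty$, and $f\in\cL_2(\Omega)$, the standard boundary regularity theorem for second-order elliptic systems (e.g.\ Folland, Chapter~7) applied to the weak solution $\Phi\in\cH^1_0(\Omega)$ upgrades it to $\Phi\in\cH^2(\Omega)$, and $\Phi$ then satisfies \eqref{pro2} in the strong sense. Finally $\Psi=\Phi+g\in\cH^2(\Omega)$ solves \eqref{pro1}, and both solutions are unique by the discussion above. The main conceptual obstacle is the coercivity step, which cannot be obtained from the G\aa{}rding inequality alone and ultimately rests on the supersymmetric assumption \eqref{superchargecond}; the remaining ingredients are standard.
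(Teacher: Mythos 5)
Your proof is correct, and the overall architecture (pass to the weak formulation \eqref{pro3}, solve it, then apply elliptic boundary regularity, and get uniqueness from Lemma~\ref{nullkernel}) matches the paper. The one genuinely different choice is in the existence step: the paper invokes the Fredholm alternative for strongly elliptic Dirichlet forms \cite[Theorem~(7.21)]{Folland}. There, the G\aa{}rding inequality together with the compact embedding $\cH^1_0(\Omega)\hookrightarrow \cL_2(\Omega)$ gives Fredholm theory, and unique solvability then follows immediately once one knows that
\[
\mathcal{K}=\{\xi\in \cH_0^1(\Omega): \fD(\varphi,\xi)=0\ \text{for all}\ \varphi\in \cH_0^1(\Omega)\}=\ker(\fH)=\{0\},
\]
which is exactly Lemma~\ref{nullkernel}. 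You instead upgrade the G\aa{}rding estimate to genuine coercivity by exploiting the positivity of the ground eigenvalue $\mu_0$ (which again rests on Lemma~\ref{nullkernel} plus compact resolvent) and then apply Lax--Milgram. Both routes are legitimate and use the same essential input, namely the trivial kernel of $\fH$ guaranteed by \eqref{superchargecond}; the paper's version is more economical because it avoids proving coercivity explicitly, while yours is slightly more quantitative since it produces the explicit constant $\mu_0/(\mu_0+1+\Lambda)$. Your convex-combination step $\alpha(1+\Lambda)=(1-\alpha)\mu_0$ with $\alpha=\mu_0/(\mu_0+1+\Lambda)$ checks out, and the remaining regularity and uniqueness arguments are identical to the paper's.
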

\begin{proof}
Recall that $\fD$ is a non-negative symmetric Dirichlet form of order~$1$. Let
\[
\mathcal{K}=\{\xi\in \cH_0^1(\Omega): \fD(\varphi,\xi)=0\quad\textrm{for all}\quad \varphi\in \cH_0^1(\Omega)\}.
\]
By virtue of regularity results for strongly elliptic Dirichlet forms \cite[Theorem~(7.32)]{Folland}, it follows that $\mathcal{K}\subset \cH^2(\Omega)$ and in fact $\mathcal{K}=\ker(\fH)$. Then, according to
Lemma~\ref{nullkernel}, $\mathcal{K}=\{0\}$.
Hence, by virtue of \cite[Theorem (7.21)]{Folland}, there exists $\Phi\in \cH_0^1(\Omega)$ such that  \eqref{pro3} holds true. Moreover, \cite[Theorem~(7.32)]{Folland} in fact $\Phi\in \cH^{2}(\Omega)$. Thus $\Phi$ is also a solution to \eqref{pro2} and $\Psi= \Phi+g$ a solution to \eqref{pro1}.

As $\mathcal{K}=\{0\}$, it immediately follows that the solution is unique.
\end{proof}
\subsection*{Pointwise regularity at the boundary}
A crucial observation on the regularity properties of supercharge operators of first order is now in place. This observation is independent of the assumption \eqref{superchargecond}, however, for its validity, the potential should be smooth.

\begin{lemma}   \label{regularity}
Let $\tilde{Q}:\cH^1(\Omega)\longrightarrow \cL_2(\Omega)$ be a (generic) supercharge operator of first order.  If $\tilde{Q}\Phi=0$ and  $\tilde{Q}^{\dag}\Phi=0$ for $\Phi\in \cH^1_0(\Omega)$,
then $\Phi\in C^\infty(\overline{\Omega})$ and
\[
\tilde{Q}\Phi(\mathbf{x})=\tilde{Q}^{\dag}\Phi(\mathbf{x})=0 \qquad \text{for all} \qquad \mathbf{x}\in \overline{\Omega}.
\]
\end{lemma}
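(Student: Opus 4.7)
The plan is to reduce the statement to standard elliptic regularity applied to the associated second-order operator $\fH=\tfrac12(\tilde Q\tilde Q^\dag+\tilde Q^\dag \tilde Q)$. Since $\tilde Q$ and $\tilde Q^\dag$ are first-order differential operators with smooth coefficients, their anti-commutator is a second-order differential operator with smooth coefficients whose principal symbol is $|\xi|^2\mathbb{I}$, that is, $\fH$ is strongly elliptic of the form $-\nabla^2\mathbb{I}+\tilde V$ with $\tilde V\in C^\infty(\overline{\Omega})$. Testing the identities $\tilde Q\Phi=0$ and $\tilde Q^\dag\Phi=0$ against arbitrary $\varphi\in C^\infty_{\mathrm c}(\Omega)$ and integrating by parts (all boundary terms vanishing on compactly supported test functions) shows that $\Phi$ is a weak solution of $\fH\Phi=0$ in $\Omega$.

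Interior smoothness $\Phi\in C^\infty(\Omega)$ then follows from standard interior elliptic regularity applied to $\fH$. For regularity up to the boundary one uses that $\Phi\in \cH^1_0(\Omega)$, so $\Phi$ has vanishing trace on $\partial\Omega$, together with the assumption that $\partial\Omega$ is of class $C^\infty$. Under these hypotheses the Dirichlet boundary-regularity theorem for strongly elliptic operators (cf. the Folland results already invoked in Lemma~\ref{existunique}) upgrades $\Phi$ to $\Phi\in C^\infty(\overline{\Omega})$.

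Once $\Phi\in C^\infty(\overline{\Omega})$, the functions $\tilde Q\Phi$ and $\tilde Q^\dag\Phi$, being obtained by applying a first-order differential operator with $C^\infty$ coefficients to a function in $C^\infty(\overline{\Omega})$, are themselves elements of $C^\infty(\overline{\Omega})$. By hypothesis each of them vanishes as an element of $\cL_2(\Omega)$, hence almost everywhere on $\Omega$; continuity then forces them to vanish identically on $\overline{\Omega}$, including at every point of $\partial\Omega$.

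The main obstacle is the boundary step: one must be sure that $\Phi$ genuinely fits the hypotheses of a boundary regularity theorem, namely that it is a weak solution of a second-order strongly elliptic equation with $C^\infty$ coefficients on a $C^\infty$ domain, with homogeneous Dirichlet datum. The smoothness of $\partial\Omega$ and of the potential were assumed from the outset, and the weak equation $\fH\Phi=0$ is obtained from the two first-order vanishing conditions as indicated above; with these ingredients the standard regularity theorem applies directly. The algebraic reformulation $\fH=\tfrac12(\tilde Q\tilde Q^\dag+\tilde Q^\dag\tilde Q)$, which is the supersymmetric identity \eqref{susycondition}, is what guarantees that ellipticity can be extracted from the first-order information.
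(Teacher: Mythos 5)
Your proposal is correct and is essentially an expansion of the paper's terse proof: where the paper simply invokes ``classical bootstrap arguments and the Sobolev Lemma'' to get $\Phi\in C^\infty(\overline{\Omega})$, you make explicit that the bootstrap runs through the second-order strongly elliptic operator $\fH=\tfrac12(\tilde Q\tilde Q^\dag+\tilde Q^\dag\tilde Q)$, establish that $\Phi$ is a weak solution of $\fH\Phi=0$ by testing and integrating by parts, and then apply interior and boundary elliptic regularity (Folland) to conclude. The final step -- that $\tilde Q\Phi$ and $\tilde Q^\dag\Phi$ then lie in $C^\infty(\overline{\Omega})$ and so vanish on the closure by continuity -- matches the paper exactly.
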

\begin{proof}
Let $\Phi$ be as in the hypothesis. By virtue of classical bootstrap arguments and the Sobolev Lemma \cite{Folland}, it follows that
$\Phi\in C^\infty(\overline{\Omega})$.
That is $\Phi$ is smooth in the domain up to the boundary. Thus also $\tilde{Q}\Phi$ and $\tilde{Q}^{\dag}\Phi$ lie in $C^\infty(\overline{\Omega})$. Hence $\tilde{Q}\Phi(\mathbf{x})=\tilde{Q}^{\dag}\Phi(\mathbf{x})=0$ for all $\mathbf{x}\in \overline{\Omega}$.
\end{proof}

\subsection*{A toy model}
Consider a version of the toy model introduced in \cite{dwln} on a compact region. In \cite{Ghh} it was shown that this model has no zero eigenvalue for the non-compact problem.
By combining the supersymmetric structure of the Hamiltonian shown below with Lemma~\ref{existunique}, it follows that the solutions to the problems \eqref{pro1} and \eqref{pro2} exist and are unique in this case.

 Let
\[\fH=p_x^2+p_y^2+x^2y^2+x\sigma_3+y\sigma_1\]
where $\sigma_i$ are the Pauli matrices. The supersymmetric charges in this case are
\begin{equation*}
Q=Q^{\dag}=
\begin{pmatrix}
-xy & i\partial_x-\partial_y\\
i\partial_x-\partial_y & xy
\end{pmatrix}.
\end{equation*}
The wavefunctions are such that
\begin{equation}
\Phi=
\begin{pmatrix}
\Phi_1 \\
\Phi_2
\end{pmatrix} \qquad \text{and} \qquad \Phi=0\quad\textrm{on}\quad \partial\Omega.
\end{equation}

We firstly show that the condition \eqref{superchargecond} is valid for the supersymmetric charges. Let $\Phi\in \cH^1_0(\Omega)$ be such that
\begin{equation} \label{vanishingboth}
     Q\Phi=Q^\dag\Phi=0 \quad \textrm{in}\quad \Omega.
\end{equation}
According to Lemma~\ref{regularity}, this condition is satisfied pointwise up to the boundary of $\Omega$. Let $\mathbf{x}\in \partial\Omega$ and denote by $\mathbf{n}_1,\mathbf{n}_2$ the components of the normal to $\partial\Omega$ at $\mathbf{x}$. The tangent to $\partial\Omega$ at $\mathbf{x}$ is then $(\mathbf{n}_2,-\mathbf{n}_1)$, and we must have
\[(\mathbf{n}_2\partial_x-\mathbf{n}_1\partial_y)\Phi(\mathbf{x})=0.\]

The solutions are regular, so we can extend them continuously up to the boundary. Then \eqref{vanishingboth} yields
\[(i\partial_x+\partial_y)\Phi_2(\mathbf{x})=(i\partial_x-\partial_y)\Phi_1(\mathbf{x})=0\]
pointwise. Since $(\mathbf{n}_1,\mathbf{n}_2)\neq 0$, if $\mathbf{n}_2\ne 0$
\begin{equation}
\left\{\begin{aligned}
&(1+i\frac{\mathbf{n}_1}{\mathbf{n}_2})\partial_y\Phi_2(\mathbf{x})=0&\Rightarrow\quad \partial_y\Phi_2(\mathbf{x})=0\textrm{ and }\partial_x\Phi_2(\mathbf{x})=0\\
&(-1+i\frac{\mathbf{n}_1}{\mathbf{n}_2})\partial_y\Phi_1(\mathbf{x})=0 &\Rightarrow\quad \partial_y\Phi_1(\mathbf{x})=0\textrm{ and } \partial_x\Phi_1(\mathbf{x})=0.
\end{aligned} \right.
\end{equation}
A similar conclusion is obtained for $\mathbf{n}_1\ne 0$. Hence $\Phi$ and $\partial_{\mathbf{n}}\Phi$ must vanish on  $\partial\Omega$.
According to the Cauchy-Kovalevskaya Theorem \cite{Folland}, from the fact that the potential is analytic, we conclude that $\Phi=0$  pointwise in $\overline{\Omega}$. This yields \eqref{superchargecond}.

Similar arguments can be employed in order to derive existence and uniqueness of the solution for the $SU(N)$ truncated model of the $D=11$ supermembrane considered in \cite{dwhn}, which only has global symmetries. See \cite{octonions}.


\section{Systems with gauge symmetry} \label{4}
We now consider supersymmetric theories subject to a constraint. This constraint realises as a subspace decomposition which diagonalizes the Hamiltonian and its Dirichlet form, while it remains compatible with the boundary conditions.

\pagebreak

\begin{assumption}[generic constraint] \label{asum} \
\begin{enumerate}
\item\label{asumb} There exist two subspaces $\X^1_0,\Y^1_0\subseteq \cH^1_0(\Omega)$ such that
\begin{itemize}
\item they are both closed in the norm of $\cH^1(\Omega)$,
\item they are orthogonal to one another in the inner product of $\cL_2(\Omega)$ and
\item $\cH^1_0(\Omega)=\X^1_0+\Y^1_0$.
\end{itemize}
We denote by  $\X^0,\Y^0\subseteq \cL_2(\Omega)$, respectively, the completion of $\X^1_0$ and $\Y^1_0$ in the norm of $\cL_2(\Omega)$.
We write $\X^2=\cH^2(\Omega)\cap \X^0$ and $\Y^2=\cH^2(\Omega)\cap \Y^0$.
\item\label{asumd} We have the decomposition
\[\left(\X^2\cap \X^1_0\right)+ \left(\Y^2 \cap \Y^1_0\right)=\cH^2(\Omega)\cap \cH^1_0(\Omega)\] and
there exist two operators
\begin{equation*} \label{factorsofH}
     \fH_{\X}:\X^2\cap \X^1_0 \longrightarrow \X_0 \qquad \text{and} \qquad
 \fH_{\Y}:\Y^2\cap \Y^1_0 \longrightarrow \Y_0
\end{equation*}
satisfying the following. The Hamiltonian decomposes
as \[\fH\psi = \fH_{\X}\psi_{\X}+ \fH_{\Y}\psi_{\Y}
\qquad \text{for all} \quad \begin{cases}
\psi=\psi_\X+\psi_\Y \\ \psi_{\X}\in \X^2\cap \X^1_0 \\
\psi_{\Y}\in \Y^2\cap \Y^1_0. \end{cases}
\]
\end{enumerate}
\end{assumption}

Various consequence can be derived from these assumptions.
Firstly note that
\[
    \cL_2(\Omega)=\X^0+\Y^0.
\]
Also $\X^2$, $\Y^2$, $\X^2\cap \X^1_0$ and $\Y^2\cap \Y^1_0$ are closed subspaces of $\cH^2(\Omega)$, in the corresponding norm. From the definition, it immediately follows that
\[
     \X^2 \subseteq \X^0 \qquad \text{and}  \qquad
   \Y^2 \subseteq \Y^0.
\]
Moreover, the closures of $\X^2$ and $\Y^2$ in the norm of $\cL_2(\Omega)$ are exactly $\X^0$ and $\Y^0$, respectively.
Hence $\X^2+\Y^2$ is dense in $\cL_2(\Omega)$.

We also have the representation
\[
       \X^1_0=\cH^1_0(\Omega)\cap \X^0 \quad \text{and} \quad
       \Y^1_0=\cH^1_0(\Omega)\cap \Y^0.
\]
Alongside with the two conditions in the Assumption~\ref{asum}, this implies that $\fH_\X$ and $\fH_\Y$ are selfadjoint operators in the Hilbert spaces $\X^0$ and $\Y^0$, respectively.

Let
\begin{align*}
    \fD_{\X}=\fD|_{\X^1_0 \times\X^1_0}&:\X^1_0 \times\X^1_0\longrightarrow \mathbb{C}
\\
\fD_{\Y}=\fD|_{\Y^1_0 \times\Y^1_0}&:\Y^1_0 \times\Y^1_0\longrightarrow \mathbb{C}
\end{align*}
be the restrictions of the Dirichlet form to the corresponding subspaces. Then both these forms are closed, symmetric and bounded below. Moreover, since
    \begin{align*}
    \fD_{\X}(\phi_\X,\psi_\X)=( \fH_\X \phi_\X,\psi_\X)_{\cL_2(\Omega)}& \qquad \text{for all }
    \phi_\X\in \X^2\cap \X^1_0 \text{ and }\psi_\X \in \X^1_0
\\
  \fD_{\Y}(\phi_\Y,\psi_\Y)=( \fH_\Y \phi_\Y,\psi_\Y)_{\cL_2(\Omega)}& \qquad \text{for all }
    \phi_\Y\in \Y^2\cap \Y^1_0 \text{ and }\psi_\Y \in \Y^1_0,
\end{align*}
then $\fH_\X$ and $\fH_\Y$ are, respectively, the selfadjoint operators associated to $\fD_{\X}$ and
 $\fD_{\Y}$, via Kato's First Representation Theorem \cite[Th VI.2.1]{Kato}. From this, it follows that
\[
    \fD(\phi,\psi)=\fD_\X(\phi_\X,\psi_\X)+\fD_\Y(\phi_\Y,\psi_\Y)
\]
for all $\phi,\psi\in \cH^1_0(\Omega)$ in the corresponding representations $\phi=\phi_\X+\phi_\Y$ and
$\psi=\psi_\X+\psi_\Y$ provided by \ref{asumb}.

\subsection*{The constrained boundary value problems} In what follows $\X^0$ is the subspace of physical states.
Set data: $g\in \X^2$ and $f=(\nabla^2-V)g\in \X^0$. Consider the constrained versions of \eqref{pro1}, \eqref{pro2} and \eqref{pro3}.
\begin{equation} \label{pro1x} \tag{V}
\begin{cases}(-\nabla^2+V)\Psi=0 & \textrm{in }  \Omega\\
\Psi\in \X^2 \\
\Psi=g & \textrm{on }   \partial\Omega,
\end{cases}
\end{equation}
\begin{equation} \label{pro2x}   \tag{VI}
\begin{cases}
(-\nabla^2+V)\Phi=f\quad & \textrm{in } \Omega \\
\Phi \in \X^2 \\
\Phi=0 & \textrm{on } \partial\Omega
\end{cases}
\end{equation}
and
\begin{equation} \label{pro3x} \tag{VII}
\fD_\X(\phi,\Phi)=(\phi,f)  \textrm{ for all } \phi\in \X^1_0.
\end{equation}
Let $\X^1=\cH^1(\Omega)\cap \X^0$ and $\Y^1=\cH^1(\Omega)\cap \Y^0$.
Then $\X^1$ and $\Y^1$ are closed subspaces of $\cH^1(\Omega)$.
Consider the following condition which is similar but weaker  than \eqref{superchargecond} from the previous section,
\begin{equation}    \label{superchargecondX} \tag{K$_\X$}
Q\psi_\X=Q^\dag\psi_\X=0\text{ for }\psi_\X\in \X^1_0
    \quad \Rightarrow \quad \psi_{\X}=0.\end{equation}
That is, the supercharge operator satisfies an analogue to \eqref{superchargecond}, but only in the constraint subspace $\X$.
Note that \eqref{superchargecond} implies \eqref{superchargecondX}.

\begin{lemma}  \label{existuniqueX}
Suppose that a decomposition as specified in the Assumption~\ref{asum} as well as \eqref{superchargecondX} hold true.
Let $g\in \X^2$ and $f=(\nabla^2-V)g\in \X^0$. There always exists a solution $\Phi\in \X^2\cap \X^1_0(\Omega)$ of \eqref{pro2x} which is unique. Moreover, a corresponding solution  $\Psi= \Phi+g\in \X^2$ of \eqref{pro1x} also exists and is unique.
\end{lemma}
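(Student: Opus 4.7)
The approach mirrors the proof of Lemma~\ref{existunique}, now performed on the closed subspace $\X^1_0\subset\cH^1_0(\Omega)$ and with \eqref{superchargecondX} substituted for \eqref{superchargecond}. The first step is to reduce \eqref{pro1x} to \eqref{pro2x} via $\Psi=\Phi+g$; this is admissible because $g\in\X^2$ ensures that $\Psi\in\X^2$ whenever $\Phi\in\X^2\cap\X^1_0$, and $f=(-\nabla^2+V)g$ is (by hypothesis) in $\X^0$. The problem \eqref{pro2x} is then to be recast as the weak problem \eqref{pro3x} on the Hilbert space $\X^1_0$ via Green's identity $\fD_\X(\phi,\Phi)=((-\nabla^2+V)\Phi,\phi)_{\cL_2(\Omega)}$ for $\Phi\in\X^2\cap\X^1_0$ and $\phi\in\X^1_0$.

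The plan is then to invoke Folland's abstract existence theorem for non-negative coercive Dirichlet forms (Theorem~(7.21)) applied to $\fD_\X:\X^1_0\times\X^1_0\to\CC$. Coercivity is automatic: since $\X^1_0$ is closed in $\cH^1(\Omega)$ and $\fD$ is coercive on $\cH^1_0(\Omega)$ with constants $C=1$ and $\lambda=1+\Lambda$, the restriction $\fD_\X$ inherits the same estimate. The substantive hypothesis is that the null space
\[
   \mathcal{K}_\X=\{\xi\in\X^1_0 : \fD_\X(\varphi,\xi)=0\text{ for all }\varphi\in\X^1_0\}
\]
is trivial. Elliptic regularity (Folland Theorem~(7.32)), applied in the subspace setting, gives $\mathcal{K}_\X\subset\X^2\cap\X^1_0$ with $\mathcal{K}_\X=\ker(\fH_\X)$, so it suffices to rule out nontrivial elements of $\ker(\fH_\X)$. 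If $\psi_\X\in\X^2\cap\X^1_0$ satisfies $\fH_\X\psi_\X=0$, then Assumption~\ref{asum}(\ref{asumd}) places $\psi_\X$ in $\operatorname{Dom}(\fH)$ with $\fH\psi_\X=\fH_\X\psi_\X=0$, and repeating the computation \eqref{hkernill} yields
\[
   0=(\psi_\X,\fH\psi_\X)_{\cL_2(\Omega)}=\tfrac{1}{2}\left(\|Q\psi_\X\|_{\cL_2(\Omega)}^2+\|Q^\dag\psi_\X\|_{\cL_2(\Omega)}^2\right),
\]
so $Q\psi_\X=Q^\dag\psi_\X=0$; since $\psi_\X\in\X^1_0$, the hypothesis \eqref{superchargecondX} forces $\psi_\X=0$.

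With $\mathcal{K}_\X=\{0\}$ established, Folland~(7.21) produces a weak solution $\Phi\in\X^1_0$ of \eqref{pro3x}, which (7.32) upgrades to $\Phi\in\X^2\cap\X^1_0$ solving \eqref{pro2x} in the strong sense; then $\Psi=\Phi+g\in\X^2$ solves \eqref{pro1x}. Uniqueness in both problems is immediate, since any two strong solutions of \eqref{pro2x} differ by an element of $\ker(\fH_\X)=\{0\}$. The point I expect to require most care is the legitimacy of applying Folland's abstract Hilbert-space results on the constrained subspace rather than on the full $\cH^1_0(\Omega)$; this relies on the fact, spelled out after Assumption~\ref{asum}, that $\fH_\X$ is genuinely the self-adjoint operator on $\X^0$ associated with $\fD_\X$ via Kato's First Representation Theorem. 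Once that identification is in place, the argument of Lemma~\ref{existunique} transfers essentially verbatim, with \eqref{superchargecondX} supplying the only piece previously provided by \eqref{superchargecond}.
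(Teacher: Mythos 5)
There is a genuine gap in the regularity step. Folland's Theorems (7.21) and (7.32) are not abstract Hilbert-space statements: they concern Dirichlet forms on the full space $\cH^1_0(\Omega)$, and their conclusions require the weak identity to be tested against \emph{all} of $\cH^1_0(\Omega)$. Your $\Phi$ only satisfies $\fD_\X(\phi,\Phi)=(\phi,f)$ for $\phi\in\X^1_0$, a proper closed subspace, so neither theorem applies ``in the subspace setting'' without further argument --- a function satisfying an elliptic equation weakly only against a subspace of test functions need not a priori enjoy $\cH^2$ regularity. You flag this as the point requiring most care, and you gesture at Kato's First Representation Theorem, but that identifies $\fH_\X$ as the self-adjoint operator associated with $\fD_\X$; it does not yield $\cH^2$ regularity, which is a PDE-specific fact. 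The paper resolves precisely this issue by a short but essential observation you omit: since $\Phi\in\X^1_0$ has vanishing $\Y$-component and $f\in\X^0$ is orthogonal to $\Y^0$, the decomposition $\fD(\phi,\Phi)=\fD_\X(\phi_\X,\Phi)+\fD_\Y(\phi_\Y,0)=(\phi_\X,f)=(\phi,f)$ shows that $\Phi$ is in fact a weak solution of the \emph{unconstrained} problem \eqref{pro3} over all of $\cH^1_0(\Omega)$. Only then does standard elliptic regularity upgrade $\Phi$ to $\cH^2(\Omega)\cap\cH^1_0(\Omega)$, hence to $\X^2\cap\X^1_0$.

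A second, more minor divergence: the paper does not invoke Folland (7.21) on $\X^1_0$ at all. Instead, it uses the supercharge factorization $\fD_\X(\phi_\X,\phi_\X)=\frac12\bigl(\|Q\phi_\X\|^2_{\cL_2(\Omega)}+\|Q^\dag\phi_\X\|^2_{\cL_2(\Omega)}\bigr)$ together with \eqref{superchargecondX} to show that $\fD_\X$ is a strictly positive closed form making $\X^1_0$ into a Hilbert space, and then applies Lax--Milgram, which genuinely is an abstract Hilbert-space tool. This is cleaner than your route via a Fredholm alternative on the constrained space, and it sidesteps the question of whether the form-domain decomposition preserves the structural hypotheses that (7.21) needs. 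Your argument that $\ker(\fH_\X)=\{0\}$ via $\fH\psi_\X=\fH_\X\psi_\X$ and the computation \eqref{hkernill} is correct and in the right spirit, but it would be better deployed as the positivity input to Lax--Milgram than as the kernel hypothesis of a theorem that does not apply on $\X^1_0$.
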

\begin{proof}
We first show that \eqref{pro3x} has a solution.  By
virtue of the assumption \eqref{superchargecondX} and the fact that
\[
      \fD_\X(\phi_\X,\phi_\X)=\fD(\phi_\X,\phi_\X)=\frac12(\|Q \phi_\X\|_{\cL^2(\Omega)}^2+
      \|Q^\dag \phi_\X\|_{\cL^2(\Omega)}^2)
\]
for all $\phi_\X\in \X^1_0$, then $\fD_\X$ is positive and $\X^1_0$ is a Hilbert space
with respect to the corresponding inner product given by this form.
By the Lax-Milgram Theorem \cite[\S 6.2]{1997Evans}, there exists  a solution $\Phi\in \X^1_0$ for \eqref{pro3x}.

Now, since $\Phi\in \X^1_0$, the part of $\Phi$ that lies in $\Y^1_0$ according to the decomposition \ref{asumb} of the Assumption~\ref{asum} is $\Phi_\Y=0$. Hence,
\[
   \fD(\phi,\Phi)=\fD_\X(\phi_\X,\Phi)+\fD_\Y(\phi_\Y,0)=( \phi_\X,f)_{\cL_2(\Omega)} = ( \phi,f)_{\cL_2(\Omega)}
\]
for all $\phi\in \cH^1_0(\Omega)$ represented as $\phi=\phi_\X+\phi_\Y$ for $\phi_\X\in\X^1_0$ and $\phi_\Y\in \Y^1_0$.
Thus $\Phi$ is also a solution of \eqref{pro3}. By repeating the same steps as in the proof of Lemma~\ref{existunique} (which applies on physical states),
we get that $\Phi\in \cH^2(\Omega)\cap \cH^1_0(\Omega)$. Hence,
we have $\Phi\in \X^2\cap \X^1_0$  and so
$\Phi$ is a solution of \eqref{pro2x}.

The rest of the proof follows from a similar argument as the one presented in Lemma~\ref{existunique}.
\end{proof}


\section{The ground state of the $D=11$ supermembrane} \label{5}
Consider the supermembrane theory on a $D=11$ Minkowski space-time introduced in  section~\ref{2}. We formulate a precise result which implies the validity of Theorem~\ref{propermain}. The relevance of the present setting is twofold. On the one hand, it is a problem of physical interest by itself due to its potential implications in M-theory. On the other hand, it is a crucial step towards the solution of the ground state problem on an unbounded domain. As we are in the presence of a gauge constraint, we resource to the framework of section~\ref{4}.

Let $H$ be the Hamiltonian given by the expression \eqref{sm}. Let $\Omega$ be a ball in $\mathbb{R}^{9(N^2-1)}$ of radius $R>0$. The boundary, $\partial\Omega$, is a sphere of dimension $9(N^2-1)-1$ with the same radius. Consider the $SO(7)\times U(1)$ decomposition as described in section~\ref{2}.
The coordinates are  $(X_i^A, Z^A,\overline{Z}^A, \lambda_{\alpha}^A)$ where $A$ is the $SU(N)$ index. The radial coordinate $\rho$ is defined by \[\rho^2= (X_i^A)^2+ 2Z\overline{Z}.\]
This radial coordinate and hence $\Omega$, are invariant under the symmetry generated by the first class constraint. That is, the generators of local $SU(N)$ transformations. Consequently the constraint imposes no restriction to the normal derivative on the border, $\partial_{\rho}\Psi\vert_{\partial \Omega}$. The constraints $\varphi^A$ given in section~\ref{2} commute with $Q_{\alpha},Q^{\dagger}_{\alpha}$ and $H$, and \[\varphi^A: \cH^2(\Omega)\cap \cH^1_0(\Omega)\longrightarrow \cH^1_0(\Omega).\]

\subsection*{Validity of \eqref{superchargecond}}
Let us verify the condition \eqref{superchargecond} for the supersymmetric charges. Assume that
\begin{equation}\label{u2}
 Q_{\alpha}\psi=0 \quad \text{ and }\quad Q_{\alpha}^{\dag}\psi=0 \quad \textrm{in} \quad \Omega\end{equation}
 for $\psi\in \cH^{2}(\Omega)\cap \cH_0^1(\Omega)$. We wish to prove that $\psi=0$ in $\Omega$.

Regularity at the boundary (Lemma~\ref{regularity}), allows us to extend the restriction on $\frac{\partial\psi}{\partial \rho^{'}}$ arising from (\ref{u2}) smoothly to the boundary. The conditions $Q_{\alpha}\psi\vert_{\partial\Omega}=0$ and $Q_{\alpha}^{\dag}\psi\vert_{\partial\Omega}=0$ for the $SU(N)$ regularized supermembrane found in \cite{dwhn}, now evaluated on the boundary where $\psi=0$, are
\begin{align}
\label{u3} Q_{8}\psi&=\sqrt{2}\partial_{Z^A}\partial_{\lambda_{8}^A}\psi- i \Gamma_{8j}^i \partial_{X_i^A}\lambda_j^A \psi=0
\\
\label{u4} Q_{8}^{\dag}\psi&=-\sqrt{2}\partial_{\overline{Z}^A}\lambda_{8}^A\psi+ i\Gamma_{8j}^i \partial_{X_i^A}\partial_{\lambda_j^A} \psi=0\\
\label{u5}
Q_{j}\psi&=-i\Gamma_{j8}^i \lambda_8^A\partial_{X_i^A}\psi-i\Gamma_{jk}^i \lambda_j^A\partial_{X_i^A}\psi+\sqrt{2}\partial_{Z^A}\partial_{\lambda_j^A}\psi=0
\\
\label{u6}
Q_{j}^{\dag}\psi&=+i\Gamma_{j8}^i \partial_{X_i^A}\partial_{\lambda_8^A}\psi+i\Gamma_{jk}^i \partial_{X_i^A}\partial_{\lambda_k^A}\psi-\sqrt{2}\partial_{\overline{Z}^A}\lambda_j^A\psi=0.
\end{align}
Here $\Gamma_{j8}^i=-i\delta_j^i$ and $\Gamma_{jk}^i= iC_{ijk}$, where the $C_{ijk}$ are the structure constants of the octonion algebra.

In order to verify \eqref{superchargecond} we only need (\ref{u3}) and (\ref{u4}). These equations are only valid at $\partial \Omega$ and they pertain the normal derivative of $\psi$ there. The $9(N^2-1)-1$ remaining angular derivatives are tangential derivatives vanishing at the boundary. That is,
\[
\partial_{\alpha_m}\psi\vert_{\partial\Omega}=0  \qquad \text{for} \quad  m=1,\dots,9(N^2-1)-1.
\]

Consider the derivative with respect to $\rho^2$. Observe that
\begin{equation}
\partial_{Z^A}\psi\vert _{\partial\Omega}=2\overline{Z}^A\frac{\partial\psi}{\partial\rho^2}\vert_{\partial\Omega}\quad \text{and} \quad
\partial_{X_i^A}\psi\vert _{\partial\Omega}=2X_i^A\frac{\partial\psi}{\partial\rho^2}\vert_{\partial\Omega}.
\end{equation}
Write $\partial_{\rho^2}\psi\equiv \psi_{\rho^2}$. Then \eqref{u3} and \eqref{u4}  reduce to
\begin{equation}\label{a}
Q_8\psi=\sqrt{2}\overline{Z}^A \partial_{\lambda_8^A}\psi_{\rho^2} + (X_i^A\lambda_i^A)\psi_{\rho^2}=0
\end{equation}
and
\begin{equation}\label{b}
Q_8^\dag\psi=-\sqrt{2}Z^A \lambda_8^A\psi_{\rho^2} -X_i^A\partial_{\lambda_i^A}\psi_{\rho^2}=0.
\end{equation}
Applying $\overline{Z}^B\frac{\partial}{\partial_{\lambda_8^B}}$ to $Q_8^\dag\psi$, gives
\begin{equation}\label{u9}
X_j^A\partial_{\lambda_j^A}(\overline{Z}^B\frac{\partial\psi_{\rho^2}}{\partial \lambda_8^B})-\sqrt{2}Z^A\overline{Z}^A\psi_{\rho^2}+ \sqrt{2}(Z^A \lambda_8^A)(\overline{Z}^B\frac{\partial\psi_{\rho^2}}{\partial \lambda_8^B})=0.\end{equation}
Replacing (\ref{a}) into (\ref{u9}), then gives
\begin{equation}\label{10}
-X_j^A \partial_{\lambda_j^A}(\frac{1}{\sqrt{2}}(X_i^B \lambda_i^B)\psi_{\rho^2})-\sqrt{2}Z^A\overline{Z}^A\psi_{\rho^2}+\sqrt{2}(Z^A\lambda_8^A)(\overline{Z}^B\frac{\partial\psi_{\rho^2}}{\partial\lambda_j^B})=0.
\end{equation}
Thus
\begin{equation}\label{11}
-(X_j^AX_j^A+2Z^A\overline{Z}^A)\psi_{\rho^2}+ (X_i^B\lambda_i^B)X_j^A\frac{\partial \psi_{\rho^2}}{\partial\lambda_j^A}+2(Z^A\lambda_8^A)(\overline{Z}^B\frac{\partial\psi_{\rho^2}}{\partial\lambda_B^A})=0.
\end{equation}

Now, on $\partial\Omega$,
\[(X_j^A)^2 +2Z^A\overline{Z}^A=R^2.
\]
Then
\[
R^2\psi_{\rho^2}\vert_{\partial\Omega}=0
\] for $R^2\ne 0$. Thus
\[
\psi_{\rho^2}\vert_{\partial\Omega}=0.
\]

By virtue of the Cauchy-Kovalevskaya Theorem, it then follows that $\psi=0$ is the unique solution in a neighbourhood of the boundary. Moreover, since the potential is analytic on $\Omega$, this solution can be extended uniquely to the whole ball $\Omega$. Thus, the supercharges $Q_{\alpha}$ and $Q_\alpha^\dag$ indeed satisfy the condition \eqref{superchargecond}.

\subsection*{The constraint}
 We now define the subspace decomposition associated to the constraint \eqref{constraint}, which is required in the framework of section~\ref{4}.  Since the $\varphi^A$ are differential operator of order 1 with $d=N^2-1$, then
\[
\varphi \equiv (\varphi^A)_1^d:C^\infty_{\mathrm{c}}(\Omega)\longrightarrow [C^\infty_{\mathrm{c}}(\Omega)]^d\subset [\cL_2(\Omega)]^d
\]
is a densely defined operator onto $d$ copies of $\cL_2(\Omega)$. Its adjoint is
\[
    \phi^\dag:\Z\longrightarrow \cL_2(\Omega)
\]
where the domain
\[
    \Z=\left\{\psi \in [\cL_2(\Omega)]^d:\begin{array}{l}
     \text{there exists }
     \lambda\in \cL_2(\Omega) \\ (\psi,\varphi \nu)_{[\cL_2(\Omega)]^d}=(\lambda,\nu)_{\cL_2(\Omega)} \\
\text{ for all }
     \nu \in C^\infty_{\mathrm{c}}(\Omega)\end{array}\right\}.
\]
Define
\[
   \X=\{\eta \in C^\infty_{\mathrm{c}}(\Omega):\varphi \eta =0\}=\ker(\varphi)
\]
and
\[
     \Y=\left\{\lambda \in C^\infty_{\mathrm{c}}(\Omega):\begin{array}{l}
     \text{for suitable }
     \psi\in \Z \\ (\lambda,\nu)_{\cL_2(\Omega)}=
     (\psi,\varphi \nu)_{[\cL_2(\Omega)]^d} \\ \text{ for all }
     \nu\in C^{\infty}_{\mathrm{c}}(\Omega)\end{array}\right\}=
     \operatorname{ran}(\varphi^{\dag}|_{[C^\infty_{\mathrm{c}}(\Omega)]^d}).
\]
Then $\X,\Y\subset C^\infty_\mathrm{c}(\Omega)$, these two spaces are orthogonal in $\cL_2(\Omega)$ and
\[
    C^\infty_\mathrm{c}(\Omega)=\X+\Y.
\]

Let $\X^1_0$ and $\Y^1_0$ be defined as the closures of $\X$ and $\Y$, respectively, in $\cH^1(\Omega)$. These two subspaces satisfy the conditions \ref{asumb}  and also the first part of the condition \ref{asumd} in the Assumption~\ref{asum}.

Since
\[
\varphi^A Q\psi= Q\varphi^A\psi \quad \text{ and }\quad \varphi^A Q^\dagger \psi= Q^\dagger \varphi^A\psi \quad \text{for all} \quad
\psi\in\X,
\]
we know that $H\eta \in \X$ for all
$\eta \in \X$. Then
\[
      H_\X\equiv H|_{\X^2\cap \X^1_0}:\X^2\cap \X^1_0\longrightarrow \X^0.
\]
Moreover, since $H$ is selfadjoint, and $\X$ and $\Y$ are orthogonal in $\cL_2(\Omega)$,
we also have $H\eta \in \Y$ for all
$\eta \in \Y$. Thus
\[
      H_\Y\equiv H|_{\Y^2\cap \Y^1_0}:\Y^2\cap \Y^1_0\longrightarrow \Y^0.
\]
This ensures the second part of the condition \ref{asumd} of the Assumption~\ref{asum}.

\subsection*{Existence and uniqueness of the ground state}
 Since \eqref{superchargecond} is fulfilled, then also \eqref{superchargecondX} is fulfilled.  The following main result is a direct consequence of  Lemma~\ref{regularity}. 
 
 Consider the boundary value problem \eqref{pro0} associated to the Hamiltonian $H$ given by \eqref{sm}, associated with the $D=11$ regularized supermembrane (the $\mathcal{N}=16$ supersymmetric $SU(N)$ matrix model). Consider
\begin{equation} \label{pro2sm}   \tag{VIII}
\begin{cases}
\begin{matrix} H\Phi= f  \\
\varphi^A\Phi =0 \end{matrix} & \textrm{in } \Omega \\
\Phi=0 & \textrm{on } \partial\Omega .
\end{cases}
\end{equation}

\begin{theorem} \label{main}
Let $\Omega$ be a $9(N^2-1)$-dimensional ball. Let $g\in \X^2$ and $f=-Hg\in \X^0$.  There always exists a unique solution $\Phi$ to the problem \eqref{pro2sm}, which lies in the space $\cH^2(\Omega)\cap \cH^1_0(\Omega)$. The corresponding solution $\Psi=\Phi+g\in \cH^2(\Omega)$ to the problem \eqref{pro0} also exists and is also unique.
\end{theorem}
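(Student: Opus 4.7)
The plan is to reduce Theorem~\ref{main} to a direct application of Lemma~\ref{existuniqueX} from section~\ref{4}, once every hypothesis of that abstract result has been verified in the specific setting of the $D=11$ regularized supermembrane.

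First I would invoke the verification of condition~\eqref{superchargecond} carried out in the opening of this section. Recall that the argument starts from $\psi \in \cH^2(\Omega) \cap \cH^1_0(\Omega)$ with $Q_\alpha \psi = Q_\alpha^\dag \psi = 0$; by Lemma~\ref{regularity}, $\psi$ extends smoothly up to $\partial\Omega$ and the supercharge identities hold pointwise there. Since $\psi|_{\partial\Omega}=0$, all tangential derivatives vanish on $\partial\Omega$, so only the radial derivative $\psi_{\rho^2}$ can survive. Applying $\overline{Z}^B\partial_{\lambda_8^B}$ to $Q_8^\dag\psi$ and substituting from $Q_8\psi$ collapses the system to the scalar relation $R^2\psi_{\rho^2}|_{\partial\Omega}=0$, hence $\psi_{\rho^2}$ vanishes on $\partial\Omega$. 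Together with $\psi|_{\partial\Omega}=0$ and the analyticity of the polynomial potential, the Cauchy--Kovalevskaya theorem forces $\psi\equiv 0$ in a boundary neighbourhood and therefore throughout $\Omega$. In particular the weaker condition \eqref{superchargecondX} also holds.

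Next I would verify Assumption~\ref{asum} for the decomposition $\cH^1_0(\Omega)=\X^1_0+\Y^1_0$ constructed from the constraint operator $\varphi=(\varphi^A)_{A=1}^{N^2-1}$. By construction $\X=\ker(\varphi)\cap C^\infty_c(\Omega)$ and $\Y=\operatorname{ran}(\varphi^\dag|_{[C^\infty_c(\Omega)]^d})$ are orthogonal in $\cL_2(\Omega)$ via the standard range--kernel duality for densely defined operators, and their sum fills $C^\infty_c(\Omega)$; taking closures in the $\cH^1(\Omega)$-norm produces the closed subspaces $\X^1_0,\Y^1_0$, which is \ref{asumb}. Condition~\ref{asumd} follows because each $\varphi^A$ commutes with $Q_\alpha,Q_\alpha^\dag$ and hence with $H$, so that $H$ preserves $\X$; by the selfadjointness of $H$ combined with the $\cL_2$-orthogonality of $\X$ and $\Y$, $H$ also preserves $\Y$. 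Restricting $H$ to the respective subspaces defines the operators $H_\X$ and $H_\Y$ and yields the block decomposition $H\psi=H_\X\psi_\X+H_\Y\psi_\Y$ on $\cH^2(\Omega)\cap \cH^1_0(\Omega)$.

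With \eqref{superchargecondX} and Assumption~\ref{asum} both established, the data hypothesis $g\in\X^2$ with $f=-Hg\in\X^0$ places us exactly in the setting of Lemma~\ref{existuniqueX}, which produces a unique $\Phi\in\X^2\cap\X^1_0$ solving \eqref{pro2sm}; setting $\Psi=\Phi+g\in\X^2$ gives the unique solution to \eqref{pro0}. The main obstacle in this programme is the verification of \eqref{superchargecond}: the manipulation of the eight boundary supercharge identities to isolate the normal derivative and extract the scalar conclusion $\psi_{\rho^2}|_{\partial\Omega}=0$ is where the algebraic structure of the $SU(N)$ matrix model and the geometric fact that $\Omega$ is a ball (so that $\rho^2$ is the natural radial variable, invariant under the gauge symmetry) combine. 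Everything else is essentially bookkeeping within the abstract framework of section~\ref{4}.
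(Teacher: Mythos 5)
Your proposal is correct and follows essentially the same route as the paper: verify condition \eqref{superchargecond} on the ball via Lemma~\ref{regularity}, the manipulation of the boundary supercharge identities \eqref{u3}--\eqref{u4} to obtain $\psi_{\rho^2}|_{\partial\Omega}=0$, and Cauchy--Kovalevskaya; then set up the $\X$/$\Y$ decomposition from the constraint operator $\varphi$ to satisfy Assumption~\ref{asum}; then apply Lemma~\ref{existuniqueX}. The only discrepancy is cosmetic: the paper's text states the theorem is a ``direct consequence of Lemma~\ref{regularity},'' which reads as a slip of the pen --- the abstract result that actually delivers existence and uniqueness is Lemma~\ref{existuniqueX}, exactly as you invoke it.
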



\subsection*{Invariance of the solution under $SO(9)$} \label{5.2}
 The supermembrane in 11D  in the Light Cone Gauge has a $SO(9)$ symmetry, the residual Lorentz invariance. The groundstate of the regularized supermembrane must be a singlet under $SO(9)$, in order to be related to the $D=11$ supergravity multiplet. 
 
 Denote by $J$ the generators of the algebra of $SO(9)$. Then $J$ is a first order differential operator which commutes with $H$ and satisfies $J\Phi\in \cH^1_0(\Omega)$ where $\Phi$ is the solution to Theorem~\ref{main}. Assume that $g$ is a singlet under $SO(9)$. That is $Jg=0$. Then the solution $\Phi$ to \eqref{pro2x} is also a singlet under $SO(9)$.

Note that
\begin{equation*}
(J^{\dagger}\xi, H\Phi)_{\cL_2(\Omega)}=(J^{\dagger}\xi, -Hg)_{\cL_2(\Omega)}=-(H\xi,Jg)_{\cL_2(\Omega)}=0.
\end{equation*}
Then \begin{equation*}
(H\xi,J\Phi)_{\cL_2(\Omega)}=0\qquad \text{for all } \xi\in C_{\mathrm{c}}^{\infty}(\Omega).
\end{equation*}
Hence $\fD(\xi,J\Phi)=0$ for all $\xi \in \cH^1_0(\Omega).$ We now use that $J\Phi\in \cH^1_0(\Omega)$ and the argument in the proof of Lemma~\ref{existunique},  to obtain
\begin{equation}
J\Phi=0
\end{equation}
as claimed. Since $\Psi=\Phi+g$, indeed $$J\Psi= J\Phi+Jg=0.$$


\section{Conclusions}
In this paper we fully settled part \ref{parta} of the programme 
established in section~\ref{intro} for proving or disproving the existence of ground state wave functions. We showed that the ground state wavefunction for the mass operator of the regularized $D=11$ supermembrane theory\footnote{That is for the Hamiltonian of the $\mathcal{N}=16$ supersymmetric $SU(N)$ matrix model.} on a bounded smooth domain, exists and is unique. Under a suitable assumption on the boundary condition, which is a given datum, this solution corresponds to a singlet under $SO(9)$, the residual Lorentz invariance. The center of mass of the supermembrane moves freely on an 11D Minkowski spacetime but the membrane excitations are restricted to a bounded smooth domain.

Supersymmetry plays a crucial role in all the results presented in this paper. They rely on general rigorous arguments formulated in sections~\ref{3} and \ref{4}. These are valid in the context of supersymmetric theories for a Schr{\"o}dinger  Hamiltonian with a polynomial potential. The bounded domain is chosen to be invariant under the action of the symmetries of the theory. The uniqueness of the groundstate wavefunction  relies on the property (K) introduced in section~\ref{3}, and it is also satisfied by the supermembrane supersymmetric charges,

\[
    Q\psi=0 \text{ and } Q^\dag\psi=0\text{ for }\psi\in \cH^1_0(\Omega)
    \quad \Rightarrow \quad \psi=0.
\]
 The framework of sections~\ref{3} and \ref{4} provides a new approach in the context of matrix models which allows characterizing  the ground state wavefunction for a wide variety of supersymmetric matrix models by means of the homogeneous and inhomogeneous Dirichlet problems. The physical theory may  or may not possess gauge symmetry. A novel feature here is a simplification of the treatment of the gauge constraint. There is no need to solve it explicitly, as it is enough to set it as a subspace of the configuration space with natural properties arising from the gauge theory. Moreover, the analysis of section~\ref{4} represents a generalization of the analysis for constrained theories whose physical space of states is a subspace given by the kernel of an operator, as it happens in physical gauge theories.

Part~\ref{parta} of the programme described in section~\ref{intro} is a crucial step towards the solution of the ground state wavefunction problem on an unbounded domain (unbounded membrane excitations). The quest for the ground state of the regularized $D=11$ supermembrane, that is, the $\mathcal{N}=16$   supersymmetric $SU(N)$ matrix model (which is expected to be a multiplet of the $D=11$ supergravity) is a fundamental step towards the quantization of these theories, and in a more general context is fundamental in the quantization of M-theory.

The methods presented above might also have an impact in other areas of physics. These include the study of AdS/CFT black holes, compact Yang-Mills matrix models and other M-theory characterizations.


\section{Acknowledgements}
We wish to thank M.~Asorey, J.~Hoppe, D.~Lundholm, P.~Meessen and M.~Trzetrzelewski for their insightful comments.
MPGM wishes to express her sincere gratitude to Jana Bj{\"o}rn and Nages Shanmugalingam for a stimulating discussion related to $^1$. MPGM acknowledges support from the Instituto de F{\'\i}sica Te{\'o}rica IFT at the Universidad Aut{\'o}noma de Madrid and the Theoretical Physics Department at the Universidad Zaragoza (Spain) where part of the research reported in this work was conducted. MPGM acknowledges support from Mecesup grant ANT1398, Universidad de Antofagasta (Chile). AR acknowledges support from project number 1121103 Fondecyt (Chile). MPGM and AR are grateful to EU-COST Action MP1210 `The String Theory Universe'.


\end{document}